\DeclareMathOperator*{\argmin}{arg\,min}
\DeclareMathOperator*{\argmax}{arg\,max}
\newtheorem{definition}{Definition}
\newtheorem{proposition}{Proposition}
\newtheorem{lemma}{Lemma}
\newtheorem{theorem}{Theorem}
\newtheorem{assumption}{Assumption}
\newcommand{\1}{\mathbbm{1}}
\def\ddefloop#1{\ifx\ddefloop#1\else\ddef{#1}\expandafter\ddefloop\fi}
\def\ddef#1{\expandafter\def\csname bb#1\endcsname{\ensuremath{\mathbb{#1}}}}
\def\ddef#1{\expandafter\def\csname c#1\endcsname{\ensuremath{\mathcal{#1}}}}
\def\ddef#1{\expandafter\def\csname v#1\endcsname{\ensuremath{\boldsymbol{#1}}}}
\def\ddef#1{\expandafter\def\csname v#1\endcsname{\ensuremath{\boldsymbol{\csname #1\endcsname}}}}
\newcommand{\syremoved}[1]{}
\newcommand{\Rad}{\text{Rad}}
\newcommand{\optp}{p^*}
\newcommand{\oraclep}{p}
\newcommand{\dualg}{\bar g_{\gamma, c}}
\newcommand{\fat}{\text{fat}}
\newcommand{\Fbound}{(R\bar x + R)}
\title{Optimal Efficiency-Envy Trade-Off via Optimal Transport}
\author{%
  Steven Yin\\ 
  Department of Industrial Engineering and Operations Research\\
  Columbia University\\
  New York, NY 10027\\
  \texttt{sy2737@columbia.edu} \\
  \And
  Christian Kroer\\ 
  Department of Industrial Engineering and Operations Research\\
  Columbia University\\
  New York, NY 10027\\
  \texttt{christian.kroer@columbia.edu} \\
}
\begin{document}

\maketitle

\begin{abstract}
	We consider the problem of allocating a distribution of items to $n$
	recipients where each recipient has to be allocated a fixed, prespecified
	fraction of all items, while ensuring that each recipient does not experience
	too much envy.  We show that this problem can be formulated as a variant of
	the semi-discrete optimal transport (OT) problem, whose solution structure
	in this case has a concise representation and a simple geometric
	interpretation.  Unlike existing literature that treats envy-freeness as a
	hard constraint, our formulation allows us to \emph{optimally} trade off
	efficiency and envy continuously.  Additionally, we study the statistical
	properties of the space of our OT based allocation policies by showing a
	polynomial bound on the number of samples needed to approximate the optimal
	solution from samples.  Our approach is suitable for large-scale fair
	allocation problems such as the blood donation matching problem, and we show
	numerically that it performs well on a prior realistic data simulator.
\end{abstract}
\section{Introduction}
\label{sec: introduction}

In this work, we focus on the problem of finding an allocation policy that
divides a pool of items, represented by a distribution $\cD$, to $n$ recipients,
under the constraint that each recipient $i$ must be allocated a pre-specified
fraction $\optp_i$ of the items, where $\optp_i\in(0,1), 1^\top \optp = 1$ is an
input to the problem that characterizes the priority of each recipient. We refer
to $\{\optp_i\}_{i=1}^n$ as the target matching distribution. In addition to
this matching distribution constraint, we also require that the recipient's envy,
which we will define formally later, be bounded. Unlike the existing resource
allocation literature, where envy-free is either treated as a hard constraint or
not considered at all (in divisible settings), we allow the central planner to specify the level of envy
that is tolerated, and find the most efficient allocation given the amount of
envy budget. This allows us to reduce the envy significantly without paying the
full price of fairness with respect to efficiency.

To concretely motivate our model, let us consider the blood donor matching
problem that was first studied in \cite{mcelfresh2020matching}.  The Meta
platform has a tool called \emph{Facebook Blood Donations}, where users who opt
in to receive notifications are notified about blood donation opportunities near
them.  Depending on the user's and the blood bank's specific characteristics
(e.g., age, occupation for the user, and locations, hours for the blood bank),
notifications about different donation opportunities have different
probabilities of resulting in an actual blood donation. The platform would like
to send each user the most relevant notifications (to maximize the total number
of potential blood donations), while maintaining certain fairness criteria for
all the blood banks that participate in this program. Although the platform can
theoretically send each user multiple notifications about multiple blood banks,
for user experience and other practical reasons, this is not done, at least in
the model introduced in \cite{mcelfresh2020matching}.  Therefore in this
problem, users' attention is the scarce resource that the platforms needs to
allocate to different blood banks. The most natural type of fairness criteria in
this setting is perhaps the number of users that received notifications about
each of the blood banks.  For example, it is not desirable to match zero users
to a particular, potentially inconveniently-located, blood bank, even if
matching zero users to this blood bank results in more blood donations in
expectation.  

Another fairness desideratum commonly studied in the
literature is called \emph{envy-freeness}. We say that recipient $A$ envies recipient $B$
if $A$ values $B$'s allocation more than her own. Intuitively, an allocation
that is envy-free--where no agent envies another agent--is perceived to be fair.

This paper considers both of the two fairness criteria mentioned above:
we study a setting where the goal is to maximize social welfare under a matching distribution constraint, while ensuring that each recipient has bounded envy.
We make the following contributions in regards to this problem:
\begin{enumerate}
    \item We formulate it as a constrained version of a semi-discrete optimal
    transport problem and show that the optimal allocation policy has a concise
    representation and a simple geometric structure. 
    This is particularly
    attractive for large-scale allocation problems, due to the fast computation
    of a match given an item. This insight also shines new light on the question of when envy arises, and when the welfare price on envy-freeness is large. 
    \item We propose an efficient stochastic optimization algorithm for this
    problem and show that it has a provable convergence rate of $O(1/\sqrt{T})$.
    \item We investigate the statistical properties of the space of our
    optimal transport based allocation policies by showing a Probably Approximately Correct (PAC)-like sample
    complexity bound for approximating the optimal solution given finite
    samples. 
\end{enumerate}

In Section~\ref{sec: problem formulation} we formally define the problem we are interested in. In Section~\ref{sec: solution structure}, we show that this problem can be
formulated as a semi-discrete optimal transport problem, whose solution has a
simple structure with a nice geometric interpretation. 
Section~\ref{sec: stochastic optimization} develops a practical stochastic optimization algorithm.
In
Section~\ref{sec: learning from samples}, we show that an $\epsilon$-approximate
solution can be found with high probability given $\tilde
O(\frac{n}{\epsilon^2})$ samples, where $n$ is the number of
recipients. Finally, in Section~\ref{sec: experiments} we demonstrate the effectiveness of our approach using both artificial and a semi-real data.

\section{Literature Review}
\paragraph{Blood donation matching.}
\textcite{mcelfresh2020matching} introduced this problem and modeled it as  an
online matching problem, where the matching quality between an user and a blood
bank is assumed to be known to the platform. The model formulation there is complex and their matching policy is rather
cumbersome, requiring a separate parameter for each (donor, recipient) pair.
Compared to their paper, we are able to provide better
structural insights to the problem by utilizing a simpler model that still
captures the most salient part of the problem.

\paragraph{Online Resource Allocation.}

Another strand of work that our paper is closely related to is that of online
resource allocation, especially those with i.i.d. or random permutation input
models. \textcite{agrawal2014dynamic} studied the setting with linear objective and gave
competitive ratio bounds. Then, \textcite{agrawal2014fast} generalized the results to
concave objectives and convex constraints. Later, \textcite{devanur2019near}
improved the approximation ratio bounds and relaxed the input assumptions on the
budgets. \textcite{balseiro2020best} show that online mirror descent on the dual
multipliers does well under both i.i.d. adversarial, and certain non-stationary
input settings.
However, none of theses papers study the envy-free criterion.
Recently, \textcite{balseiro2021regularized} studied an online resource allocation
problem with fairness regularization. Although the authors did not explicitly
study envy regularization, their regularization framework can be modified to
accommodate envy regularization. However, like all the other papers mentioned in
this paragraph, the offline solution is used as the benchmark to measure regret,
but no explicit solution is given to the offline problem. Our analysis focuses
on the offline problem, and draws an explicit connection to optimal transport,
which allowed us to provide a novel PAC-like analysis on the sample complexity
of the problem. In another recent paper, \textcite{sinclair2021sequential} studied
the trade-off between minimizing envy and minimizing waste, which refers to un-allocated resources. Despite close similarity between our titles, their offline benchmark
is the standard Eisenberg-Gale program, which is envy-free, but does not address
the welfare cost of achieving envy-freeness. 
\paragraph{Fair Division.}
Envy is a popular concept studied in the fair division literature. A large body of these papers are formulated as a
cake-cutting problem (\cite{robertson1998cake,chen2013truth,mossel2010truthful})
where the resources are modeled as an interval and the agents' valuations are
represented as functions on this interval. 
\textcite{caragiannis2009efficiency} provide a analysis on the worst case efficiency loss due to the envy-freeness constraint. Later \textcite{cohler2011optimal} design algorithms for computing optimal envy-free cake cutting allocations under different relatively simple classes of valuation functions. 
Unlike these papers that focus on the hard constraint of zero envy, we treat the allowable envy as a parameter, and find the most
efficient solutions subject to the desired amount of envy.
In indivisible settings, there are some related concepts of envy that does not require a strict zero envy. For instance, \textcite{budish2011combinatorial} proposes an approximate competitive equilibrium from equal incomes approach that  achieves envy free up to 1 item (EF1). \textcite{caragiannis2019unreasonable} introduce the concept of envy free up to the Least Valued Good, which is a stronger version of EF1. Although these papers do allow a small amount of envy, these concepts are introduced mainly to circumvent the impossibilities introduced by the indivisible settings, not to allow central planners to control the level of envy tolerance. Finally, \textcite{donahue2020fairness} also studies a setting where the central planner can set his own tolerance of (un)fairness. However their setting is quite different from ours, as they consider the allocation of fixed amount of identical resource, where as we assume that recipients have different valuations for the different items.

\paragraph{Optimal Transport.} OT has been applied before in resource allocation settings in the economics literature (see \cite{galichon2018optimal} for a survey). 
For example, the Hotelling location model with a continuous mass of consumers can be solved with the same assignment procedure as the one we consider for our allocation problem \emph{without} envy constraints. The more general method of assigning points in space to a finite set of sites via this procedure was developed by \textcite{aurenhammer1998minkowski}.
\textcite{scetbon2021equitable} consider an OT setting with equitability (every agent's final utility is the same), which is a different fairness criteria from envy, and also not adjustable like in our setting. To the best of our knowledge, no existing application of OT models the envy constraints that we consider here.

\subsection{Background on Optimal Transport}
Since our work draws an explicit connection to optimal transport (OT), we provide a
summary of key OT results here.  Let $\alpha, \beta$ be two probability measures on
the metric spaces $\cX, \cY$ respectively. We define $\Pi(\alpha, \beta)$ as the
set of joint probability measures on $\cX\times \cY$ with marginals $\alpha$ and
$\beta$. The \emph{Kantorovich formulation} of the optimal transport problem
\cite{kantorovich1942on} can be written as
\begin{align}
     L(\alpha, \beta)\coloneqq \min_{\pi \in \Pi(\alpha, \beta)}\int_{\cX \times \cY} c(x, y) d\pi(x, y)
    \label{eq: optimal tranport primal}
\end{align}
where $c(x, y)$ is the cost associated with ``moving'' $x$ to $y$.  This is
called a transportation problem because the conditional probability $\pi(y|x)$
specifies a transportation plan for moving probability mass from $\cX$ to $\cY$. Note that $\pi(x, y) = \pi(y|x)d\alpha(x)$.
If $\beta$ is a discrete measure, i.e. $\cY$ is finite, then it is known
\cite{aude2016stochastic} that the dual to \eqref{eq: optimal tranport primal}
can be written as (here we abuse the notation $\beta$ to also represent the
vector of probability masses, where $\beta_i$ is the probability mass on point
$y_i$):
\begin{equation}
    \max_{g\in\bbR^n} \cE(g) \coloneqq \sum_{i\in[n]}\left[\int_{\bbL_{y_i}(g)} c(x, y_i)-g_i\,d\alpha(x)\right] + g^\top \beta
    \label{eq: optimal transport dual}
\end{equation}
where $n=|\cY|$, and $\bbL_{y_i}$ is what is sometimes referred to as the \emph{Laguerre
cell}: 
\begin{equation}
\bbL_{y_i}(g) = \left\{x\in\cX: \forall i\neq j, c(x, y_i) - g_i\leq c(x, y_j)-g_{j} \right\}
\label{eq: laguerre cell}
\end{equation}
\begin{proposition}[Proposition~2.1 \cite{aude2016stochastic}]
    If $\alpha$ is a continuous measure, and $\beta$ a discrete measure, then
    $L(\alpha, \beta) = \max\limits_{g}\cE(g)$, and
    the optimal solution $\pi$ of \eqref{eq: optimal tranport primal} is given by the partition $\left\{\bbL_{y_i}(g^*), i\in
    [n]\right\}$, i.e. $d\pi(x, y_i)=d\alpha(x)$ if $x\in \bbL_{y_i}(g^*)$, $0$ otherwise.
    \label{thm: duality of semi-discrete OT}
\end{proposition}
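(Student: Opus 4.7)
The plan is to derive \eqref{eq: optimal transport dual} from the standard Kantorovich dual by eliminating the continuous dual variable via a $c$-transform, and then to exhibit the primal optimizer by pairing the complementary-slackness conditions with the Laguerre partition.

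First I would invoke Kantorovich duality in its usual form,
\begin{equation*}
L(\alpha,\beta) \;=\; \sup_{(f,g)}\;\int_{\cX} f(x)\,d\alpha(x) \;+\; \sum_{i\in[n]} g_i\,\beta_i,
\end{equation*}
where the supremum is taken over $(f,g)$ with $f(x)+g_i\le c(x,y_i)$ for all $x\in\cX$ and $i\in[n]$. Strong duality here is classical for $\alpha$ continuous and $c$ lower semicontinuous; I would cite it rather than reprove it. The point of this step is just to get from a primal over couplings to a dual whose variables are much easier to handle now that $\beta$ is finitely supported.

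Next, for any fixed $g\in\bbR^n$ the optimal $f$ is pointwise as large as possible subject to the constraints, namely the $c$-transform
\begin{equation*}
f_g(x) \;=\; \min_{i\in[n]}\bigl(c(x,y_i)-g_i\bigr).
\end{equation*}
Substituting $f_g$ into the dual objective gives
\begin{equation*}
\int_{\cX} \min_{i\in[n]}\bigl(c(x,y_i)-g_i\bigr)\,d\alpha(x) \;+\; g^\top\beta.
\end{equation*}
I would then split the integral according to which index attains the minimum. By the definition of the Laguerre cell \eqref{eq: laguerre cell}, the index $i$ is a minimizer precisely on $\bbL_{y_i}(g)$, and the cells cover $\cX$ with pairwise intersections contained in lower-dimensional $\alpha$-null sets because $\alpha$ is continuous. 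This yields exactly $\cE(g)$ as in \eqref{eq: optimal transport dual}, completing the identity $L(\alpha,\beta)=\max_g\cE(g)$.

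For the characterization of the primal optimum, let $g^*$ be an optimizer of $\cE$ and define $\pi$ by $d\pi(x,y_i)=\1\{x\in\bbL_{y_i}(g^*)\}\,d\alpha(x)$. I would verify two things: (i) $\pi\in\Pi(\alpha,\beta)$, i.e.\ the marginals come out right; (ii) $\pi$ achieves the dual value. For (ii), observe that on $\bbL_{y_i}(g^*)$ one has $c(x,y_i)=f_{g^*}(x)+g^*_i$, so $\int c\,d\pi = \int f_{g^*}\,d\alpha + g^{*\top}\mu$, where $\mu_i:=\alpha(\bbL_{y_i}(g^*))$; this matches $\cE(g^*)$ iff $\mu=\beta$. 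This brings me to (i), which is the main obstacle: one must show $\mu=\beta$. I would do this by the first-order optimality of $g^*$. The function $g\mapsto\cE(g)$ is concave and, at points of differentiability, $\partial_{g_i}\cE(g)=-\alpha(\bbL_{y_i}(g))+\beta_i$; setting the gradient to zero gives $\mu_i=\beta_i$. (At non-differentiable $g^*$ one uses the subdifferential: $0\in\partial\cE(g^*)$ corresponds to some allocation of the boundary $\alpha$-mass among the tied cells, but this mass is zero under the continuity of $\alpha$, so $\mu=\beta$ still holds.) With (i) and (ii) both established, $\pi$ is feasible and attains the dual optimum, hence by weak duality it is primal optimal, finishing the proof.
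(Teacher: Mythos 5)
Your proof is correct and follows essentially the same route the paper takes: the paper does not prove this proposition itself (it is imported from \cite{aude2016stochastic}), but its appendix proof of the generalized Theorem~\ref{thm: duality of envy constrained OT} uses the identical template --- Kantorovich/Fenchel--Rockafellar duality, elimination of the continuous dual variable by a pointwise $c$-transform, and a rewriting of the resulting integral over the Laguerre cells. The one place you go beyond the paper is in actually verifying the primal characterization, i.e.\ that the cell masses equal $\beta$ via the first-order condition $\nabla_g \cE = 0$ (consistent with the gradient formula \eqref{eq: dual g gradient}); the paper's own appendix argument stops at the dual reformulation and leaves that step implicit.
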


\section{Problem Formulation}
\label{sec: problem formulation}
There is a set of $n$ recipients $\cY$. There is a ``pool'' of items,
represented by a distribution $\alpha$ over $\cX\subseteq [0, \bar x]^n$. Each
random draw from this distribution $X\sim \alpha$ is a vector representing the
$n$ recipients' valuations of this item. The goal is to maximize the expected
matched utilities of the recipients, while maintaining the constraint that the
recipient $y_i$ is matched $\optp_i$ fraction of the times in expectation.  Here
$\{\optp_i\}_{i=1}^n$ is called the target matching distribution, which
intuitively represents recipients' importance. Note that the constrains here are satisfied \emph{in expectation}, which are sometimes referred to as ``ex-ante'' guarantees.  The reason why we consider ex-ante guarantees has to do with the type of application we're interested in. Our motivating example is concerned with recommending hundreds of millions of users to different blood banks. In such settings, even if we want the constraints to hold ex-post, the large-scale nature of the problem and the law of large numbers means that in-expectation guarantees translate into something that is very close to holding ex-post. That is why for internet platform problems, requiring constraints to hold in expectation is a  standard setup (see for example the literature on budget constraints in ad auctions, where this is the case \cite{balseiro2015repeated}). A matching policy $\pi$ takes a
valuation vector and maps it (potentially with randomness) to one of the $n$
recipients. Let $\pi(y|x)$ denote the probability of matching the item to $y$
given valuation vector $x$. The basic problem formulation is to solve the
following optimization problem:
\begin{align}
    \max\limits_{\pi}\,&\bbE_{X\sim\alpha}\left[ \sum_{i=1}^n X_i\pi(y_i|X)]\right]\label{prob: basic optimal matching policy problem}\\ 
    s.t.\quad & \bbP\left[\pi(y_i|X)\right] = \optp_{i}\quad \forall i\in[n]\nonumber
\end{align}
WLOG, we assume that $\optp_i>0$ for all $i$. An example of such problem can be see in Figure~\ref*{fig: 2D
example}, where $\alpha$ is a distribution over the unit square, and the goal is
to partition the square into blue and orange regions (given to $A$ and $B$
respectively) such that each region covers the desired $p^*_A$, $p^*_B$
probability mass. Note that the orange and blue regions are allowed to over lap
(probabilistic partition), and that the boundary does not have to be linear as
illustrated in the figure. As we will show later in Section~\ref{sec: solution
structure}, despite the large design space permitted by the formulation in
\eqref{prob: basic optimal matching policy problem}, we can in fact focus on a
much smaller design space. 

In resource allocation problems, it is often the case that we care not just
about efficiency (the sum of all recipients'
utilities), but also other fairness criteria.  One of the most commonly studied
fairness criteria is envy-freeness. Agent $y_i$ envies another agent $y_j$ if agent
$y_i$ values the allocation given to $y_j$ more (after adjusting for their priority
weights). We can formally define agent $y_i$'s envy as 
\begin{equation}
    \label{eq: envy definition}
    Envy(y_i) = \max_j \bbE_{\alpha}\left[ X_i\pi(y_j|X) \frac{\optp_i}{\optp_j} - X_i\pi(y_i | X)\right]
\end{equation}
Instead of the vanilla formulation in \eqref{prob: basic optimal matching policy
problem}, we consider the following more general formulation:
\begin{align}
    \max\limits_{\pi}\,&\bbE_{X\sim\alpha}\left[ \sum_{i=1}^n X_i\pi(y_i|X)]\right]\label{prob: optimal matching policy problem}\\ 
    s.t.\quad & \bbP\left[\pi(y_i|X)\right] = \optp_i\quad \forall i\in[n]\nonumber\\
    & Envy(y_i)\leq \lambda_i\quad\forall i\nonumber
\end{align}
Most existing literature focuses on finding
allocations such that $Envy(y_i)$ is at most $0$ for every $y_i$. This can be
a very restrictive constraint, often satisfied at the cost of reducing
efficiency by a significant amount (This reduction is sometimes referred to
as the Cost-of-Fairness). We take a different approach, and allow the central
planner to set non-negative constraints on envy. Note that since we are motivated by internet-scale problems such as allocating hundreds of millions of users to donation centers, we focus on the ex-ante guarantees on the constraints.

\section{Optimal Solution Structure}
\label{sec: solution structure}
The space of feasible solutions for \eqref{prob: optimal matching policy problem} is
large, which makes the problem difficult to optimize directly. However we can
use the tools from OT to reduce the search space to something
with much more structure.  The key observation is that \eqref{prob: optimal
matching policy problem} can be formulated as variation of the semi-discrete
optimal transport problem given in Equation~\eqref{eq: optimal tranport primal}. 

Let's first consider the simpler case in \eqref{prob: basic optimal matching
policy problem} where there are no envy constraints.  In this case, the problem
can be stated in the form of \eqref{eq: optimal tranport primal} as follows: the
cost function is the negative utility of the matched recipient $c(x, y_i) =
-x_i$, the $\beta$ measure is the discrete measure
$\sum_{i=1}^n\optp_i\delta_{y_i}$, and the matching policy $\pi(y|x)$ in
\eqref{prob: optimal matching policy problem} is exactly the conditional
probability of the joint distribution in \eqref{eq: optimal tranport primal}.
From Theorem~\ref{thm: duality of semi-discrete OT} it follows that the optimal
matching policy is represented by Laguerre cells given in \eqref{eq: laguerre
cell}: $x$ is matched to $y_i$ if $i = \argmin_{k} -x_{k} - g_{k}$.
Note that the dual
variables $g\in \bbR^n$ serve as an ``adjustment'' over the agents' reported
utilities, and the resulting matching policy is simply a greedy policy over this
adjusted valuation vector. 
\begin{figure*}[]
\begin{minipage}{0.5\textwidth}
    \centering
    \includegraphics[width=0.65\textwidth]{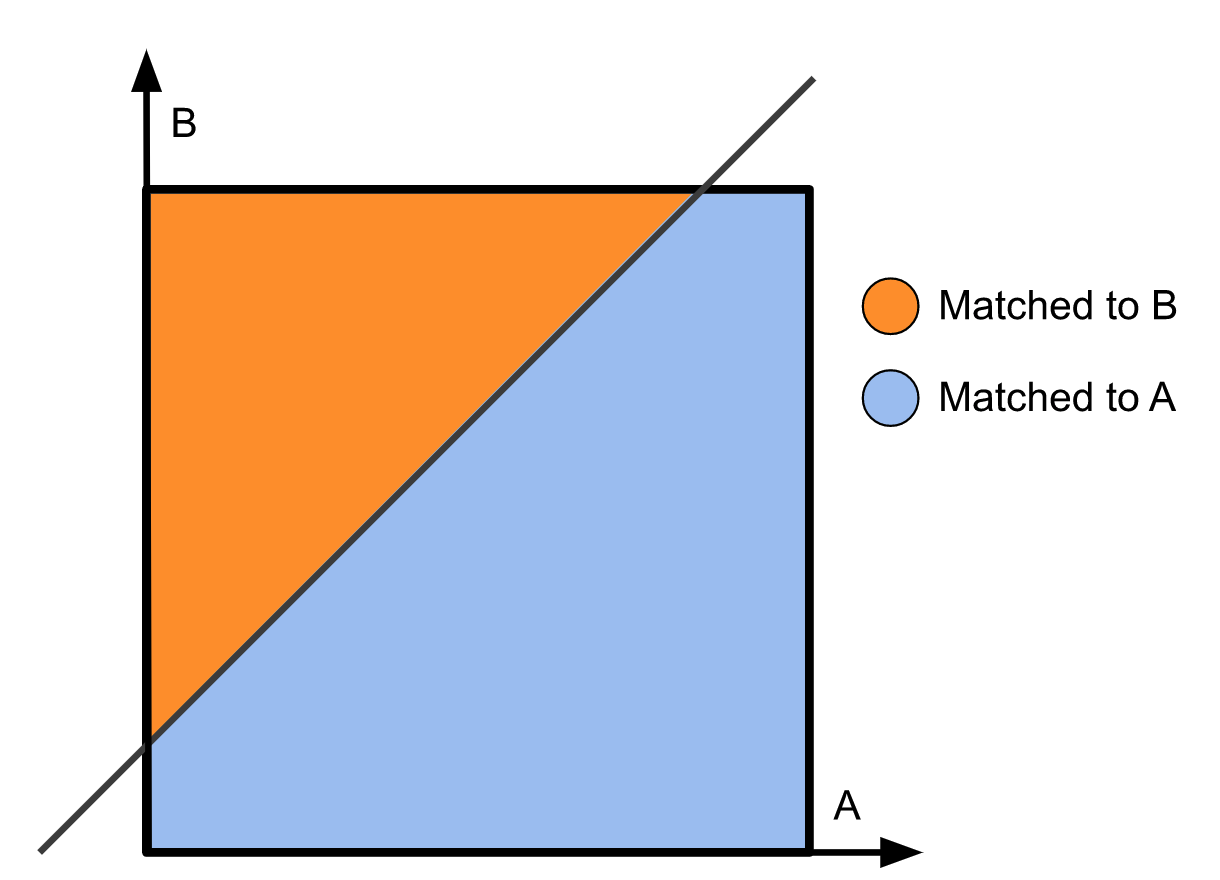}
\end{minipage}%
\begin{minipage}{0.5\textwidth}
    \centering
    \includegraphics[width=0.65\textwidth]{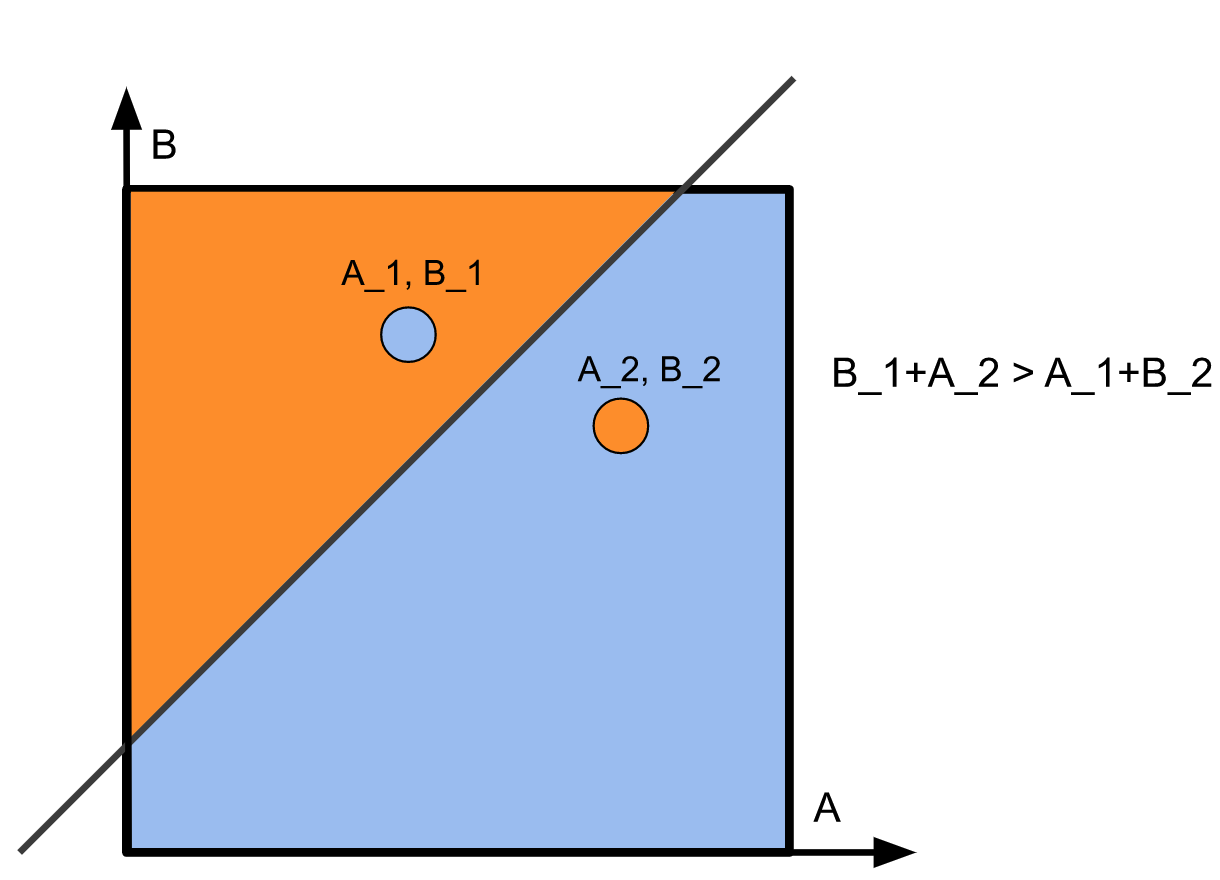}
\end{minipage}
\caption{Left: An illustration of what Laguerre cells look like when $n=2$. Consider any distribution on the support $[0,1]^2$.
The optimal division of the space is to move the diagonal line up or down until
the probability mass contained in the orange region is equal to $\optp$. Right:
A pictorial proof of the optimality of such partition.  Suppose one can find an
$\epsilon$ mass above this line that is matched to $A$, and an $\epsilon$ mass
below the line that is matched to $B$, then switching the assignments of these
two regions increases the matched weights because $B_1+A_2 > A_1+B_2$.}
\label{fig: 2D example}
\end{figure*}

Geometrically, each Laguerre cell is simply the intersection of
half-spaces: $\bbL_i(g)=\cap_{k} \{ x: x_i+g_i \geq x_k + g_k\}$. To visualize this
better, consider the simple setting with two recipients $A, B$ where their
valuations for an item is a joint distribution supported on $[0,1]^2$.
Suppose we want to match $\optp$ fraction of the items to recipient $B$.
Figure~\ref{fig: 2D example} gives a proof-by-picture that the optimal strategy
is to divide the space up with a slope-$1$ diagonal line such
that the probability mass lying above the line is equal to $\optp$. This
geometric interpretation of the matching policy plays a crucial role in getting 
us a sample complexity bound later in Section~\ref{sec: learning from samples}.

With this geometric interpretation of the solution space in mind, let us
consider the more general case with envy constraints as formulated in
\eqref{prob: optimal matching policy problem}.  The envy constraints can be
added to the OT problem in \eqref{eq: optimal tranport primal} like so:
\begin{align}
    L(\alpha, \beta, \lambda) =  &\min\limits_{\pi\in \Pi(\alpha,\beta)} \int_{\cX \times \cY} c(x, y) d\pi(x, y)\label{eq: individual envy constraint primal}\\
    s.t. &\int_{\cX}c(x, y_j) d\pi(x, y_j) - \int_{\cX}c(x, y_j) d\pi(x, y_k) \frac{\beta_j}{\beta_k} \leq \lambda_j 
    &\forall (j,k)\in [n]^2, j\neq k\nonumber
\end{align}
Although envy constraints make the solution space more
complicated, we show that it retains the geometric structure of being the
intersection of half-spaces. The dual of \eqref{eq: individual envy constraint
primal} can be derived using Fenchel-Rockafellar's theorem:
\begin{align}
    \max_{g\in\bbR^n, \gamma\in\bbR^{n^2-n}_+}\cE(g, \gamma) \coloneqq &\sum_{j\in [n]}\int_{\bbL_{y_j}(g, \gamma)} \dualg(x, y_j) d\alpha(x)
     + g^\top \beta - \sum_{j,k, j\neq k}\gamma_{jk}\lambda_j\label{eq: individual envy constraint dual laguerre}
\end{align}
where 
\begin{equation}
\label{eq: dual g with envy}
    \dualg(x, y_j) \coloneqq \left(1+\sum_{k\neq j}
\gamma_{jk}\right)c(x, y_j) - \sum_{k\neq j}\gamma_{kj}c(x,
y_k)\frac{\beta_k}{\beta_j} - g_j,
\end{equation}%
\begin{equation}
\label{eq: generalized laguerre cell}
    \bbL_y(g, \gamma) \coloneqq \left\{ x\in\cX: y = \argmin\limits_{y'\in\cY} \dualg(x, y')\right\}.
\end{equation}%

\begin{theorem}
    If $\alpha$ is a continuous measure, and $\beta$ a discrete measure, then
    $L(\alpha, \beta, \lambda) = \max\limits_{g, \gamma}\cE(g, \gamma)$, and
    the optimal solution $\pi$ of \eqref{eq: individual envy constraint
    primal} is given by the partition $\left\{\bbL_{y_i}(g^*, \gamma^*), i\in
    [n]\right\}$: $d\pi(x, y_i) = d\alpha(x)$ if $x\in \bbL_{y_i}(g^*, \gamma^*)$,
    $0$ otherwise.
    \label{thm: duality of envy constrained OT}
\end{theorem}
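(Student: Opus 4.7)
The plan is to parametrize the primal by writing $d\pi(x, y_j) = q_j(x)\, d\alpha(x)$ with $q_j(x) \geq 0$ and $\sum_j q_j(x) = 1$ for $\alpha$-a.e.\ $x$, dualize the marginal equality constraints with multipliers $g_j \in \bbR$ and the envy inequality constraints with multipliers $\gamma_{jk} \geq 0$, and then recover the optimal $\pi$ as the pointwise minimizer of the Lagrangian. This extends the proof of Proposition~\ref{thm: duality of semi-discrete OT} to accommodate the envy constraints.

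First, I would form the Lagrangian and collect the coefficient multiplying each $q_j(x)$ inside the integral. The envy constraint indexed by $(j, k)$ contributes $\gamma_{jk}\, c(x, y_j)$ to the coefficient of $q_j(x)$, while the envy constraint indexed by $(k, j)$ contributes $-\gamma_{kj}\frac{\beta_k}{\beta_j} c(x, y_k)$ to the coefficient of $q_j(x)$; combining these with the bare cost $c(x, y_j)$ and the marginal term $-g_j$ assembles exactly into $\dualg(x, y_j)$ from \eqref{eq: dual g with envy}. The leftover constants are $g^\top \beta - \sum_{j\neq k} \gamma_{jk}\lambda_j$.

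Next, the inner minimization over $q$ decouples across $x$: at each $x$, the problem of minimizing $\sum_j \dualg(x, y_j)\, q_j(x)$ subject to $q_j(x) \geq 0$ and $\sum_j q_j(x) = 1$ is solved by placing unit mass at $\argmin_j \dualg(x, y_j)$, i.e.\ by $q_j^\star(x) = \1\{x \in \bbL_{y_j}(g, \gamma)\}$. Continuity of $\alpha$ ensures that the tie set has $\alpha$-measure zero, so the argmin is essentially unique and substituting gives the inner value $\sum_j \int_{\bbL_{y_j}(g, \gamma)} \dualg(x, y_j)\, d\alpha(x)$. Adding the leftover constants recovers $\mathcal{E}(g, \gamma)$, which already yields weak duality. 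To close the gap I would invoke Fenchel--Rockafellar, observing that the primal is linear in $\pi$ over a closed convex set of joint measures with finitely many affine equality and inequality constraints.

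The main obstacle is verifying the constraint qualification required for strong duality. The proportional allocation $\pi(y_j \mid x) \equiv \beta_j$ has zero envy for every recipient and meets the marginal constraints exactly, so it is a strict Slater point whenever $\lambda > 0$ componentwise, which unlocks Fenchel--Rockafellar; the boundary case $\lambda_j = 0$ can be treated by taking $\lambda \downarrow 0$ and using continuity of $L(\alpha, \beta, \lambda)$ in $\lambda$. Once strong duality is secured, any optimal $(g^\star, \gamma^\star)$ yields the primal optimum via the same pointwise argmin, giving $d\pi^\star(x, y_j) = d\alpha(x)$ on $\bbL_{y_j}(g^\star, \gamma^\star)$ and zero elsewhere, and KKT complementary slackness automatically certifies that the marginal and envy constraints of \eqref{eq: individual envy constraint primal} are satisfied.
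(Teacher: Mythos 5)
Your proposal is correct and follows essentially the same route as the paper: dualize via Fenchel--Rockafellar, observe that the inner optimization (the paper's dual potential $f(x)$, your pointwise minimization over the simplex of conditionals $q_j(x)$) collapses to $\min_{j}\dualg(x,y_j)$, and read off the Laguerre cells. Your verification of a Slater point via the proportional allocation $\pi(y_j\mid x)\equiv\beta_j$ is a useful addition that the paper omits, but it does not change the structure of the argument.
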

Note that when $c(x, y_i) = -x_i$, $\dualg(x, y_j)$ is linear in $x$,
which means that the new Laguerre cells $\bbL_y(g, y)$ given in Equation~\eqref{eq: generalized
laguerre cell} are still intersections of half spaces (some examples are given later in Figure~\ref{fig: slanted dist allocation plot}). Furthermore, the
allocation policy can be interpreted as a greedy policy based on the
adjusted utility given by \eqref{eq: dual g with envy}, which contains additional interaction terms that take envy into account.

\section{Stochastic Optimization}
\label{sec: stochastic optimization}
In Section~\ref{sec: solution structure} we showed that the optimal allocation policy to our problem has a simple geometric structure in the form of Laguerre cells. In this section we present a practical algorithm for actually computing the optimal Laguerre cells. First we show that
the objective function $\cE(g, \gamma)$ in \eqref{prob: optimal matching policy problem}
is concave by rewriting the objective as follows:
\begin{equation}
    \label{eq: dual objective integration version}
    \cE(g, \gamma) = \int_{\cX} \min_{i\in [n]} \dualg(x, y_i) d\alpha(x) + g^\top \beta - \sum_{j,k, j\neq k}\gamma_{jk}\lambda_j
\end{equation}
Since $\dualg(x, y_j)$ is linear in $g$ and $\gamma$ and taking a minimum preserves concavity, the objective function is
concave. 
Therefore, the dual problem is a constrained convex optimization problem. 
The gradient of $\cE(g, \gamma)$ can be computed as follows:
\begin{align}
    \nabla_g \cE(g, \gamma)_j &= -\int_{\bbL_{y_j}(g, \gamma)} d\alpha(x) + \beta_j\label{eq: dual g gradient}\\
    \nabla_\gamma \cE(g, \gamma)_{jk} &= \int_{\bbL_{y_j}(g, \gamma)} c(x, y_j)d\alpha(x)
     - \int_{\bbL_{y_k}(g, \gamma)} c(x, y_j)\frac{\beta_j}{\beta_k}\,d\alpha(x) - \lambda_j\label{eq: dual gamma gradient}
\end{align}

\begin{algorithm}
	\DontPrintSemicolon 
	\KwIn{Distribution $\alpha$, target matching distribution $\optp$, timesteps $T$.}
	Initialize $g_0 = 0, \gamma_0 = 0, \eta = \frac{1}{\sqrt{T}}$.\;
	\For{$t \gets 0, 1, 2, \ldots, T$} {
	    Sample $x_t \sim \alpha$\;
	    $g_{t+1} \leftarrow g_t + \eta  \hat\nabla_g \cE(g, \gamma)$\;
	    $\gamma_{t+1} \leftarrow \left(\gamma_{t} + \eta \hat\nabla_\gamma \cE(g, \gamma)\right)^+$\;
	}
	\KwRet{$\sum_{t=1}^T g_t/T, \sum_{t=1}^T \gamma_t/T$}
	\caption{Projected SGD for Envy Constrained Optimal Transport}
	\label{algo: Projected SGD for Envy Constrained Optimal Transport}
    \end{algorithm}
Calculating this gradient is hard, as it involves integration over an arbitrary
measure $\alpha$. However, an unbiased, stochastic version of the gradient can
be easily obtained from a single sample $x\sim \alpha$:
\begin{align}
    \hat\nabla_g \cE(g, \gamma)_j &= -\1[x \in \bbL_{y_j}(g, \gamma)] + \beta_j\label{eq: dual g gradient stochastic}\\
    \hat\nabla_\gamma \cE(g, \gamma)_{jk} &=  c(x, y_j)\1[x \in \bbL_{y_j}(g, \gamma)]
    - c(x, y_j)\frac{\beta_j}{\beta_k}\1[x \in \bbL_{y_k}(g, \gamma)]- \lambda_j\label{eq: dual gamma gradient stochastic}
\end{align}
The details of the algorithm is given in Algorithm~\ref{algo: Projected SGD for
Envy Constrained Optimal Transport}.  Standard projected SGD analysis (see for
example \cite{harvey2018machine}) tells us that Algorithm~\ref{algo: Projected
SGD for Envy Constrained Optimal Transport} converges at the rate
$\cE(g^*, \gamma^*)-\cE\left(\bbE[g_T], \bbE[\gamma_T]\right)\leq
O\left(\frac{1}{\sqrt{T}}\right)$.

\section{Learning from Samples}
\label{sec: learning from samples}
So far we have assumed that the true underlying distribution is
known, and that we can freely draw independent samples it.
In many settings, we only have access to $\alpha$
in the form of finite number of i.i.d. samples. The goal of this section is to establish a sample complexity bound for solving the dual problem
$\eqref{eq: individual envy constraint dual laguerre}$.


In this section, we focus only on the assignment cost function $c(x, y_i) =
-x_i$, which models our original resource allocation problem proposed in Section~\ref{sec: problem formulation}.
Let $S = \{X^1, X^2, \ldots, X^m\}$ be $m$ independent samples from
$\alpha$. The empirical version of the dual objective \eqref{eq: dual objective
integration version} is:
\begin{equation}
    \label{eq: dual objective empirical}
    \cE_S(g, \gamma) = \frac{1}{m} \sum\limits_{t=1}^m\min_{i\in [n]} \dualg(X^t, y_i) + g^\top \beta - \sum_{j,k, j\neq k}\gamma_{jk}\lambda_j
\end{equation}
Let $\hat g_S, \hat \gamma_S$ be the empirical maximizer given the set of
samples $S$: $(\hat g_S, \hat\gamma_S) \coloneqq \argmax \cE_S(g, \gamma)$, and
$g^*, \gamma^*$ be the population maximizer $(g^*, \gamma^*)=\argmax \cE(g,
\gamma)$.
We want to bound the number of samples needed so that $\cE(g^*,
\gamma^*) - \cE(\hat g_S, \hat \gamma_S)$ is small with high probability.  Let's
introduce some notations to facilitate our later discussions.  Define the
following hypothesis class for each $i$:
\begin{align}    
    F_i = \left\{x\mapsto \dualg(x, y_i) + g^\top \beta - \sum_{j,k, j\neq k}\gamma_{jk}\lambda_j: g \in \bbR^n, \gamma \in \bbR_+^{n(n-1)}\right\}.
    \label{eq: separate hypothesis class definition}
\end{align} 
as well as the overall hypothesis class:
\begin{align}    
    F = \left\{x\mapsto \min_{i\in[n]}\dualg(x, y_i) + g^\top \beta - \sum_{j,k,
    j\neq k}\gamma_{jk}\lambda_j : g \in \bbR^n, \gamma \in
    \bbR_+^{n(n-1)}\right\}.
    \label{eq: hypothesis class definition}
\end{align} 
Plugging $c(x, y_i) = -x_i$ into the definition of $\dualg$, we see that for a given $g, $ and $\gamma$, the corresponding
hypothesis $f_i\in F_i$ can be written as $f_i(x) = w^\top x + b$, where 
\begin{equation}
    \label{eq: w definition}
w_j = 
  \begin{cases}
    -(1+\sum\limits_{k\neq i}\gamma_{ik}), &\text{if $j=i$}\\
    \gamma_{ji}\frac{\beta_j}{\beta_i}, &\text{if $j\neq i$}
  \end{cases}, \text{ and} 
\end{equation}
\begin{equation}
    \label{eq: b definition}
b = -g_i + g^\top \beta -  \sum_{j,k,
j\neq k}\gamma_{jk}\lambda_j.
\end{equation} 
It also follows that 
\begin{equation}
    \label{eq: hypothesis class contained in the min over halfspaces}
    F \subseteq F_{min} \coloneqq \{x\mapsto \min_i f_i(x): f_i\in F_i\}.
\end{equation}
Note that $F$ defined in \eqref{eq: hypothesis class definition} is the main object of interest, as it contains all the possible Laguerre cell parameters.  We showed in \eqref{eq: hypothesis class contained in the min over halfspaces} that $F$ is at most as complex as $F_{min}$, a hypothsis class constructed from $n$ affine hypothesis classes. This interpretation of the original hypothesis class as the minimum over $n$
affine hypothesis classes is the key observation to prove the sample complexity
bound.  We prove our main result under the following boundedness assumption:
\begin{assumption}
    The hypothesis $f(x) = \min_i f_i(x) = \min\limits_i {w^i}^\top x +b^i$ corresponding to the optimal dual solution $g^*,
    \gamma^*$ satisfies $\|w^i\|_1\vee |b^i| \leq R$ for some $R>0$. In particular,
    these assumptions imply that $F_i$ and $F$ are uniformly bounded by $R\bar x + R$.
    \label{assum: boundedness assumptions}
\end{assumption}
From \eqref{eq: w definition} and \eqref{eq: b definition} we can see that this
is essentially a bound on the optimal dual variables $g^*, \gamma^*$, and a
bound on the ratio $\beta_j/\beta_i$, both of which are determined by the input
distributions $\alpha, \beta$, and do not depend on the number of samples. In
other words, $R$ is a problem dependent constant.

\begin{theorem}
Under Assumption~\ref{assum: boundedness assumptions}, for a given sample size $m$, 
with probability $1-\delta$,  $\cE(g^*, \gamma^*)
- \cE(\hat g_S, \hat \gamma_S) < O\left(\sqrt{\frac{(\log m)^3 + \log(1/\delta)}{m}}\right)$. 
\label{thm: sample complexity bound}
\end{theorem}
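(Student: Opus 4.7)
The plan is a standard uniform convergence argument that reduces the excess risk to the Rademacher complexity of $F$. By the optimality of $(\hat g_S, \hat\gamma_S)$ for $\cE_S$ and of $(g^*,\gamma^*)$ for $\cE$, the two-sided decomposition
\begin{align*}
\cE(g^*,\gamma^*) - \cE(\hat g_S, \hat\gamma_S)
&\leq \bigl[\cE(g^*,\gamma^*) - \cE_S(g^*,\gamma^*)\bigr] + \bigl[\cE_S(\hat g_S, \hat\gamma_S) - \cE(\hat g_S, \hat\gamma_S)\bigr] \\
&\leq 2\sup_{f\in F}\bigl|\cE_S(f)-\cE(f)\bigr|
\end{align*}
holds. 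By Assumption~\ref{assum: boundedness assumptions} every $f\in F$ is uniformly bounded by $\Fbound$, so McDiarmid's inequality concentrates this supremum around its mean with deviation $O\bigl(\Fbound\sqrt{\log(1/\delta)/m}\bigr)$, producing the $\sqrt{\log(1/\delta)/m}$ term. Symmetrization then controls the expectation by $2\Rad_m(F)$, reducing the proof to a Rademacher complexity bound on $F$.

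To bound $\Rad_m(F)$, I would invoke the containment $F\subseteq F_{min}$ from \eqref{eq: hypothesis class contained in the min over halfspaces} and bound $\Rad_m(F_{min})$ via a covering-number argument. Each $F_i$ is a class of affine functions $x\mapsto w^\top x + b$ on $[0,\bar x]^n$ whose parameters, by \eqref{eq: w definition}, \eqref{eq: b definition} and Assumption~\ref{assum: boundedness assumptions}, satisfy $\|w\|_1\leq R$ and $|b|\leq R$. For $\ell_1$-bounded linear classes a classical bound (Bartlett) gives $\fat_{F_i}(\gamma)=O\bigl((R\bar x)^2\log n/\gamma^2\bigr)$, which via Mendelson's entropy inequality translates into
$$\log N(\gamma, F_i, \|\cdot\|_\infty) = O\bigl((R\bar x)^2\log n\cdot\log(1/\gamma)/\gamma^2\bigr).$$
Since the min operator is $1$-Lipschitz in $\|\cdot\|_\infty$, combining individual $\gamma$-covers of the $F_i$'s produces a $\gamma$-cover of $F_{min}$, so $\log N(\gamma, F_{min}, \|\cdot\|_\infty)\leq n\,\log N(\gamma, F_i, \|\cdot\|_\infty)$.

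Finally I would plug this entropy estimate into Dudley's entropy integral with a cutoff at $\gamma_0\approx 1/\sqrt m$:
$$\Rad_m(F_{min})\leq \gamma_0 + \frac{C}{\sqrt m}\int_{\gamma_0}^{\Fbound}\sqrt{\log N(\gamma, F_{min}, \|\cdot\|_\infty)}\,d\gamma.$$
The integrand is of order $\sqrt{\log(1/\gamma)}/\gamma$, and the substitution $u=\log(1/\gamma)$ turns the integral into $\int\sqrt u\,du$, yielding a factor $(\log m)^{3/2}$. This gives $\Rad_m(F)\leq \Rad_m(F_{min}) = O\bigl((\log m)^{3/2}/\sqrt m\bigr)$, and combining with the McDiarmid deviation gives the claimed rate $O\bigl(\sqrt{((\log m)^3+\log(1/\delta))/m}\bigr)$.

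The main obstacle is the $\min$ operation over $n$ linear classes: the standard Rademacher contraction lemma (Ledoux--Talagrand) is stated for scalar Lipschitz compositions and does not directly apply to a vector $\min$. The cleanest remedy is precisely to drop down to the covering-number level, where the $\|\cdot\|_\infty$-Lipschitzness of $\min$ yields a clean multiplicative $n$ blow-up of the log-entropy. A secondary subtlety is the $\gamma\to 0$ divergence in Dudley's integral, which forces the truncation at $1/\sqrt m$; the two $\log m$ factors (one from integrating $1/\gamma$, one from $\log(1/\gamma)$ inside the fat-shattering bound) are exactly the source of the $(\log m)^3$ polylog in the theorem's rate.
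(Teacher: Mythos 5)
Your argument follows the same overall route as the paper: the same uniform-convergence decomposition, concentration plus symmetrization to reduce to $\Rad_m(F)$, Mendelson's fat-shattering-to-covering-number conversion, and Dudley's entropy integral truncated at $1/\sqrt{m}$, with the two $\log m$ factors arising exactly as you describe. The one genuine divergence is the key complexity lemma for the min-class. The paper does not decompose $F_{min}$ at all: it invokes a single external result (Theorem~3 of Kontorovich \emph{et al.}) that directly bounds $\fat_\zeta(F_{min})\leq \frac{c_0\Fbound^2}{\zeta^2}n\log n$ for the min-aggregation of $n$ bounded affine classes, and then applies Mendelson and Dudley to $F$ itself. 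You instead bound each $F_i$ separately and combine $\gamma$-covers using the $1$-Lipschitzness of $\min$ in the sup norm, which is more self-contained and correctly sidesteps the inapplicability of scalar Rademacher contraction that you flag. The cost is a norm-bookkeeping subtlety you gloss over: Mendelson's inequality (as used in the paper) yields empirical $L_2$ covering numbers, while your cover-combination step needs $L_\infty(S)$ covers of each $F_i$ (or else combining $L_2$ covers of the components only yields a $\sqrt{n}\gamma$-cover of the min, costing an extra factor of $n$ in the entropy). Converting fat-shattering to $L_\infty(S)$ covers is possible (Alon \emph{et al.}) but carries extra $\log m$ factors. Either fix leaves the $m$-dependence in the theorem statement intact, so your claimed rate $O\bigl(\sqrt{((\log m)^3+\log(1/\delta))/m}\bigr)$ survives, but the $n$-dependence (which underlies the paper's $\tilde O(n/\epsilon^2)$ sample-complexity claim) degrades slightly relative to the paper's aggregate bound; this is precisely what the direct fat-shattering bound on $F_{min}$ buys the authors.
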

The proof of this theorem can be found in the appendix. We will provide some high level intuition here. The proof uses the fat-shattering dimension, which is a concept that generalizes the Vapnik–Chervonenkis (VC) dimension to functions of real values. Like the VC dimension, the fat-shattering dimension is a measure of capacity for a class of functions. The larger the fat-shattering dimension, the more ``complex'' the function class is. Intuitively, the more complex the function class is, the more samples one needs to accurately identify the optimal function within that class. Recall that the object of interest in our paper is the Laguerre cell, whose boundaries are defined by a set of hyperplanes. Readers readers familiar with traditional PAC learning results might recall that hyperplanes have a low VC dimension. It turns out that the hypothesis class associated with Laguerre cells defined above in \eqref{eq: hypothesis class definition} has a low complexity as well. 
Similar to how low VC dimension leads to low sample complexity in classification tasks, the fact the boundaries of our Laguerre cells are consisted of hyperplanes also lead to low sample complexity in our setting.

\paragraph{Relation to Section~\ref{sec: stochastic optimization}} Note that there is some connection between the convergence of a stochastic optimization method (computational complexity), and the sample complexity bound of the same problem (learning complexity). Both say something about the number of steps/samples one needs to arrive at a good solution. However, in general, these two things are not the same. In particular, the learning complexity result from this section is algorithm agnostic. The proof of Theorem~\ref{thm: sample complexity bound} actually implies uniform convergence: that no matter which hypothesis one considers, its' empirical objective will be close to the expected objective. However, Theorem~\ref{thm: sample complexity bound} does not provide a way for practitioners to actually compute the solution. The result from Section~\ref{sec: stochastic optimization} on the other hand does provide an algorithm for practitioners to use, and shows that the algorithm is computationally efficient. However, the convergence result  in Section~\ref{sec: stochastic optimization} only works for the specific optimization method that we proposed. Therefore these two results are complements of each other, and together paint a relatively complete picture on how difficult the task it.

\section{Experiments}
\label{sec: experiments}
\begin{figure*}[]
	\begin{minipage}{0.33\textwidth}
	    \centering
	    \includegraphics[width=1\textwidth]{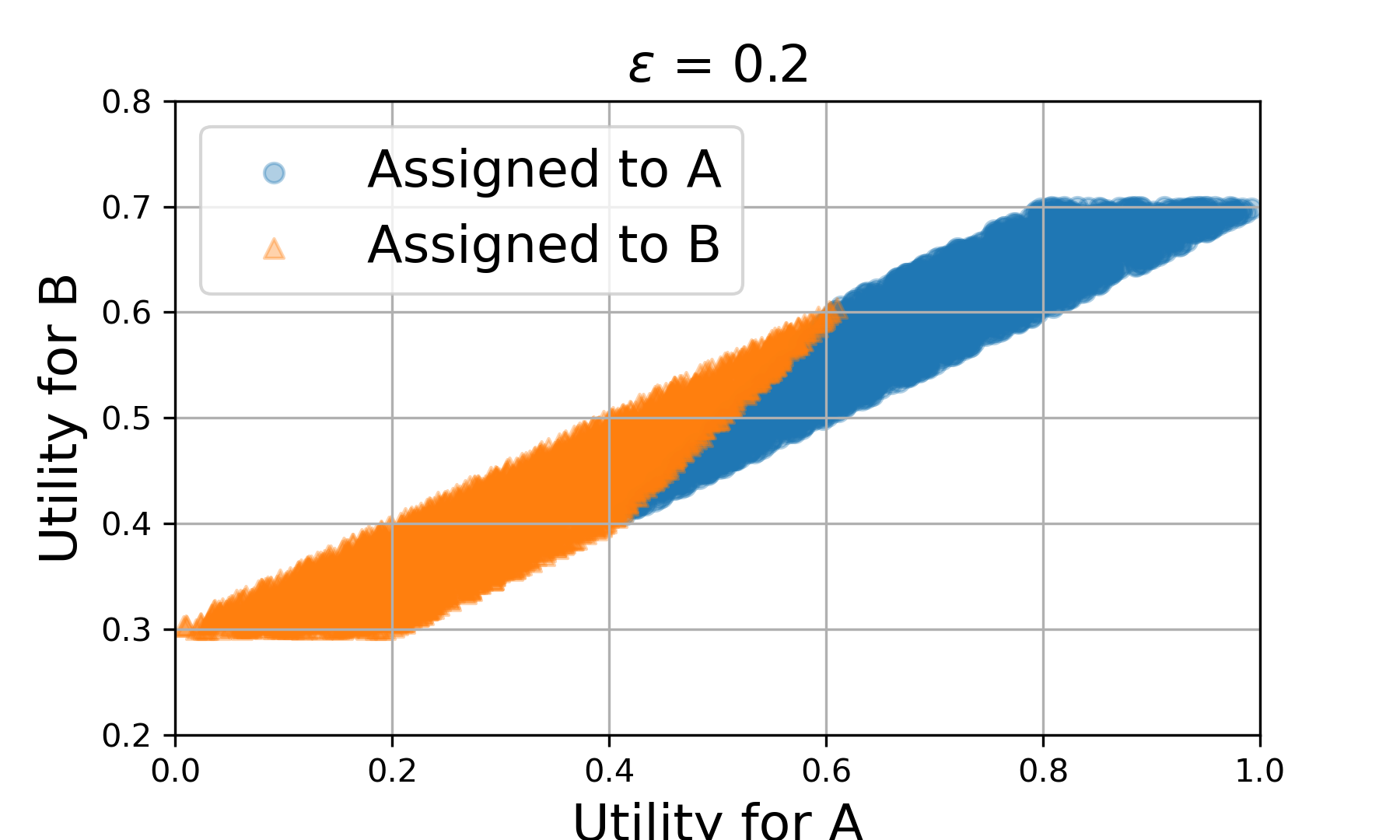}
	\end{minipage}%
	\begin{minipage}{0.33\textwidth}
	    \centering
	    \includegraphics[width=1\textwidth]{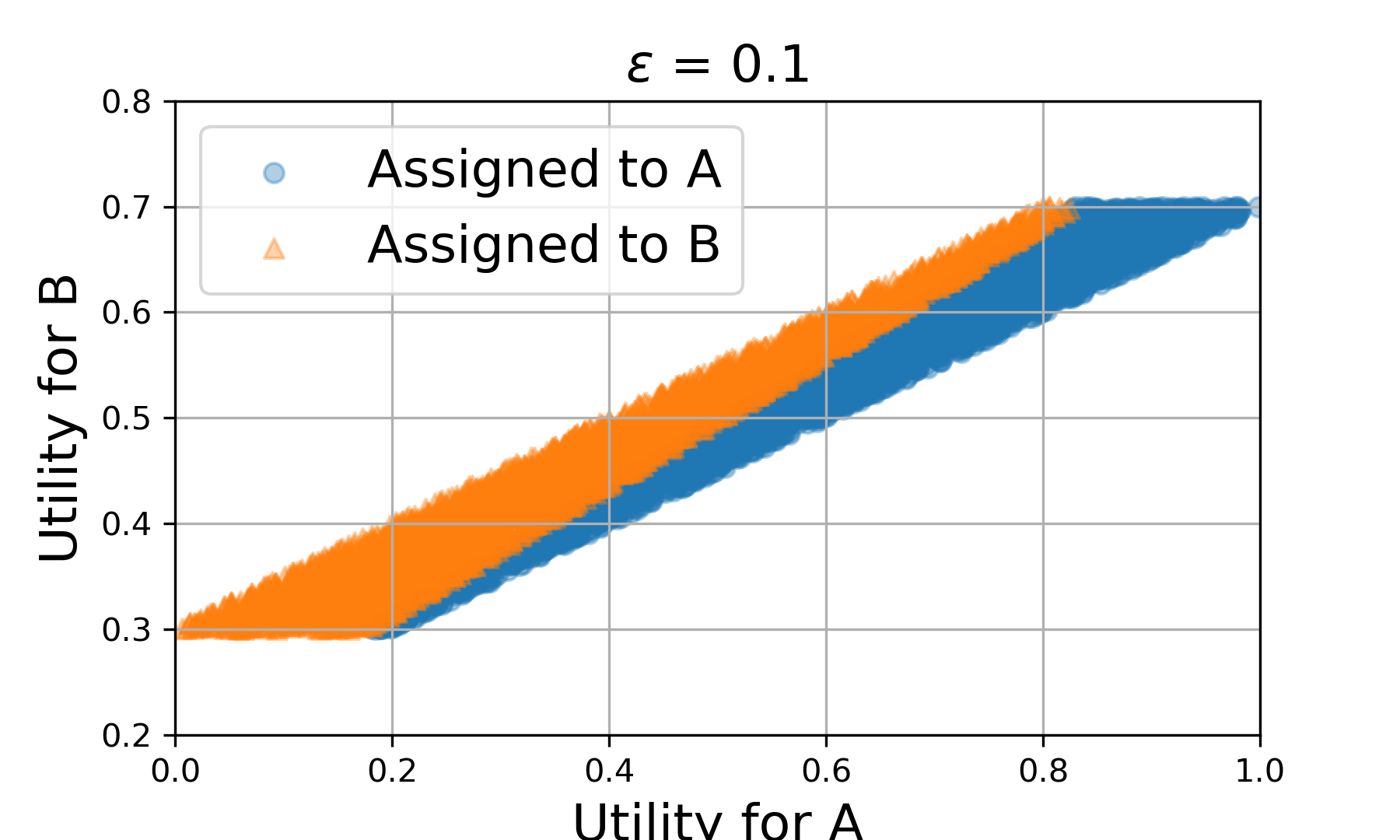}
	\end{minipage}%
	\begin{minipage}{0.33\textwidth}
	    \centering
	    \includegraphics[width=1\textwidth]{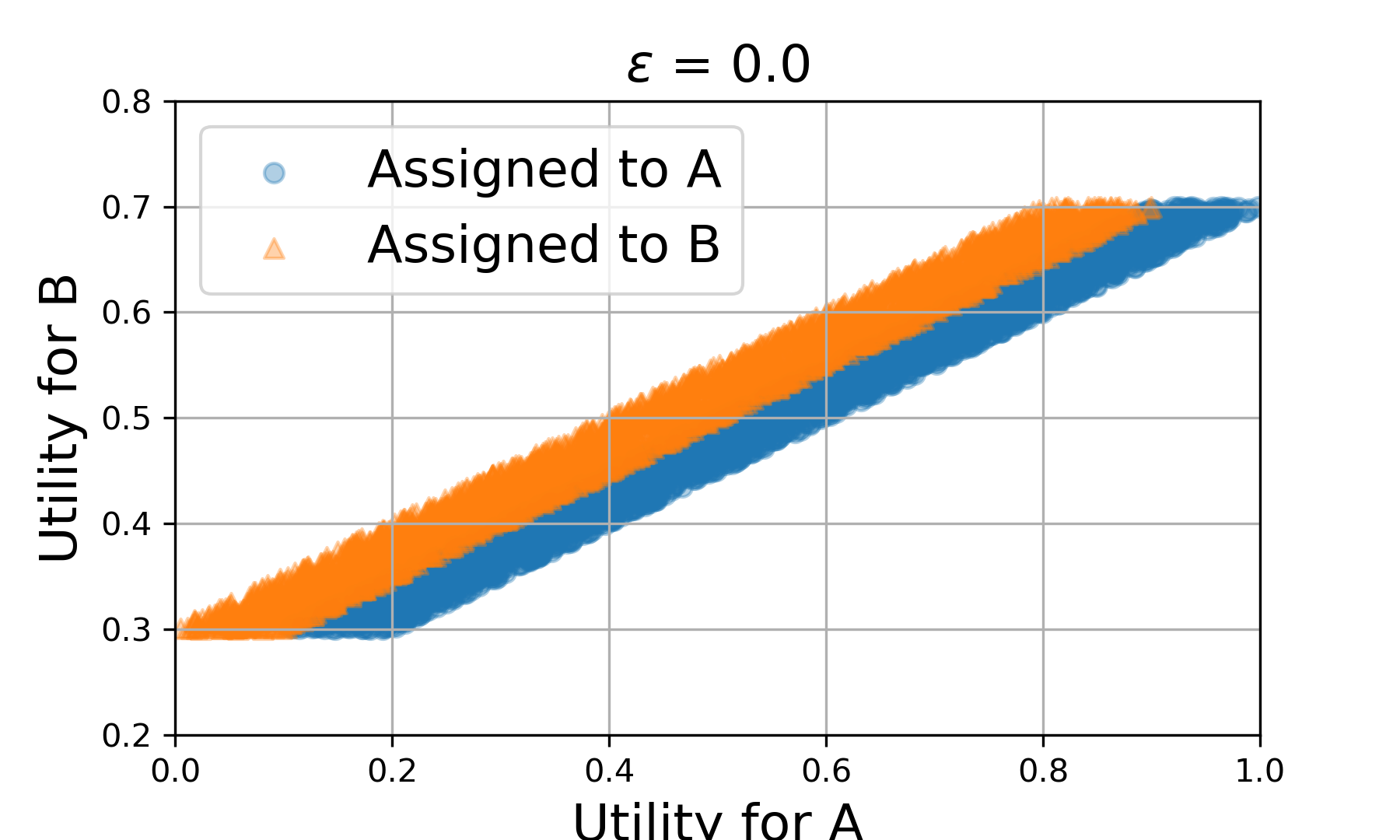}
	\end{minipage}
	\caption{Allocation policy for artificial data under different envy
	constraints. From left to right: $\epsilon=0.2, 0.1, 0.0$. When the envy
	constraint is loose (large $\epsilon$), $B$ envies $A$, since both agents
	prefer the items on the top right, but most of them are allocated to $A$. 
	As the envy constraint tightens, the allocation boundary tilts in the
	direction that makes the allocations more even between the two agents.
	}
	\label{fig: slanted dist allocation plot}
\end{figure*}
\begin{figure*}
	\begin{minipage}{0.48\textwidth}
    \centering
    \includegraphics[width=0.8\columnwidth]{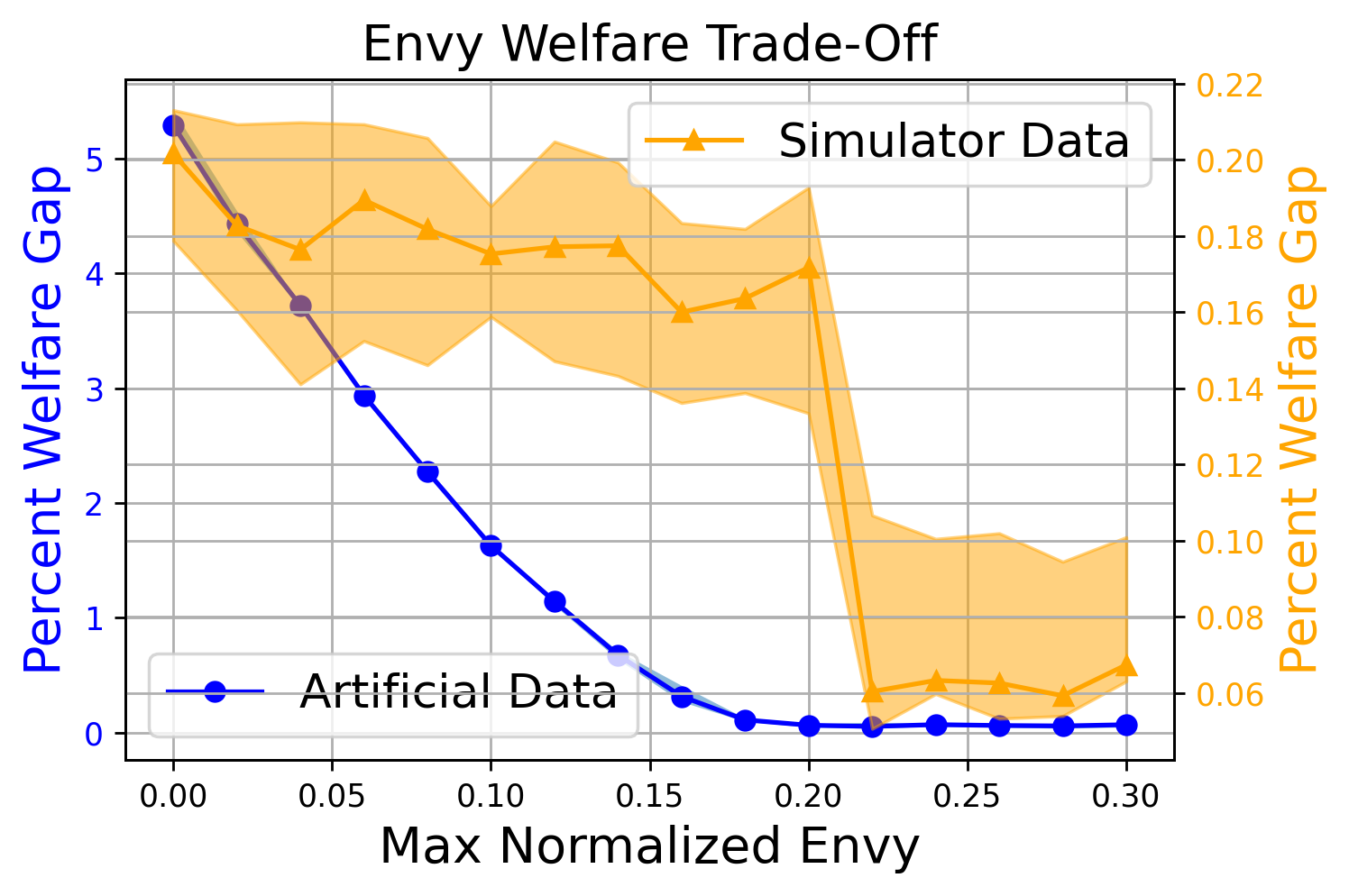}
    \caption{The trade-off curve between envy and welfare for both data-sets.
    The shaded region is between 25th and 75th percentile of the trials.  The
    non-monotonicity in the plot for the simulator data is due to the
    stochasticity in the SGD algorithm. } 
    \label{fig: slanted dist trade off curve}
	\end{minipage}%
	\hfill
	\begin{minipage}{0.48\textwidth}
    \centering
    \includegraphics[width=0.8\columnwidth]{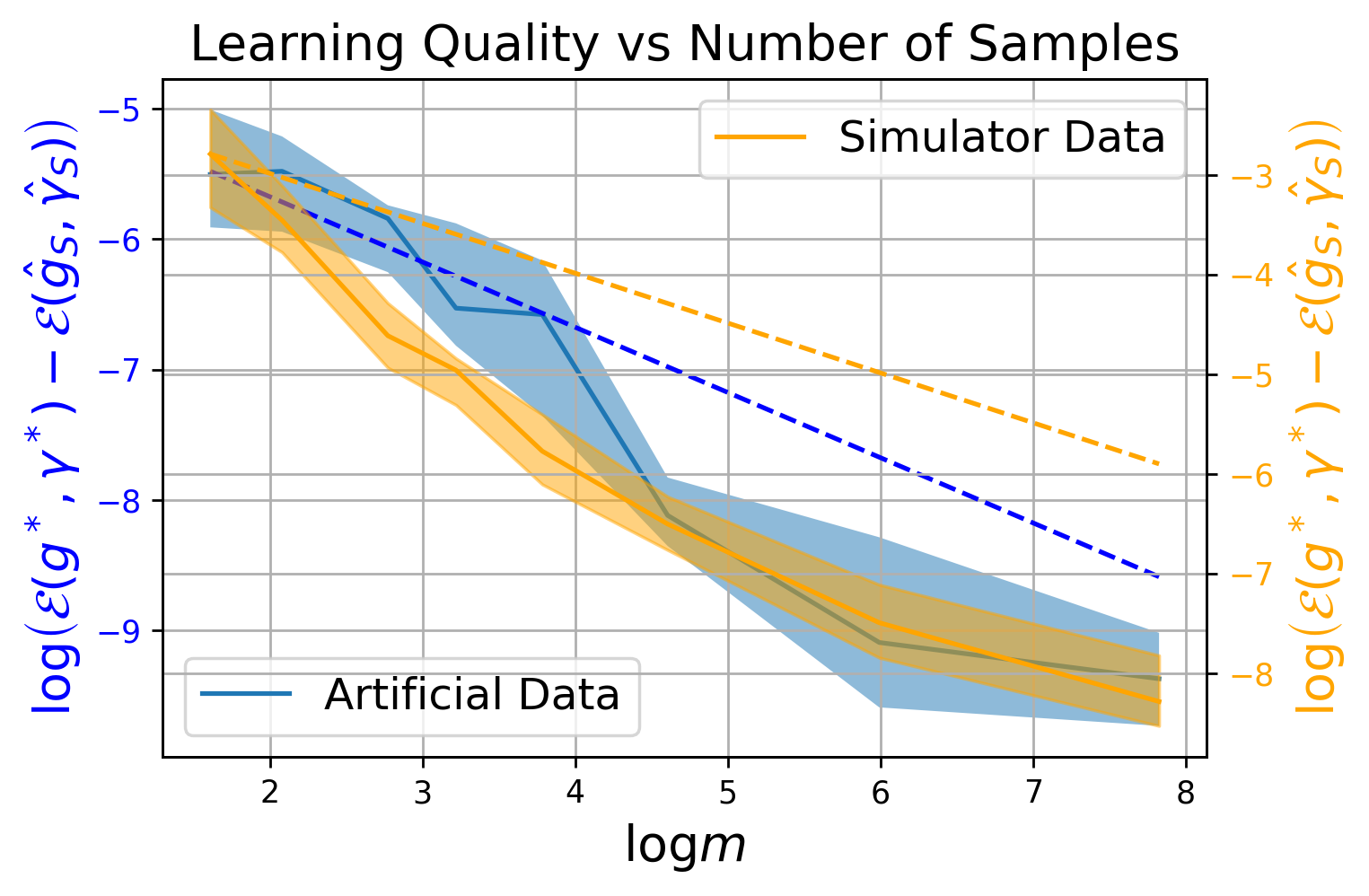}
    \caption{Approximation gap with respect to sample size. Both $x$ and $y$ axis are in log scale. The solid line is the median and the shaded region is between the $25$th and $75$th percentile.  The dashed lines show what the theoretical $1/\sqrt{m}$ rate would look like.}
    \label{fig: sample complexity plot}
	\end{minipage}
\end{figure*}
We test our solution with both artificial data, and simulated data from a realistic
simulator for blood donor matching developed by \cite{mcelfresh2020matching}. The artificial data contains two recipients, and their valuation distribution is a linearly transformed uniform distribution. This is to make visualization of the resulting allocation policy easier. The simulator data is based on geographical and population information from San Francisco, and contains $5$ recipients.
To set the envy budgets $\lambda\in\bbR^n$, we first decide on a constant
$\epsilon \in \bbR_+$, and then multiply this by the target matching
distribution $\optp \in \bbR^n$: $\lambda_{ij} = \epsilon \optp_i\,\forall
i,j$. 
With this setup $\epsilon$ is a bound on the 
normalized envy for each recipient: $\frac{1}{\optp_i}Envy(i)\leq \epsilon \,\forall i.$
Figure~\ref{fig: slanted dist allocation plot} illustrates how the allocation policy changes as we change $\epsilon$. As the envy constraint tightens, the decision boundary tilts in the direction that split the ``good'' (items which both agents prefer) and ``bad'' (items which both agents dislike) items more evenly between the recipients.

Next we investigate the trade-off between envy and social welfare by using SGD to
compute approximately optimal allocations for varying $\epsilon$.  We plot the
percent welfare gap (difference between the maximum welfare without envy
constraints, and the welfare with envy constraints, divided by the former) with
respect to realized, max normalized envy.  Figure~\ref{fig: slanted dist trade
off curve} shows the result.  For the simulator data, the welfare gap is small
even with a no-envy constraint, which means that aiming for envy free
allocations might make sense. In the case of the artificial data however, paying
$~50\%$ of the full price of fairness reduces $~65\%$ of the envy. In such
settings, one might want to sacrifice some envy for better welfare.

These experiments also highlight when envy arises. When recipients' utilities are
highly correlated, but one recipient has larger variance than others, that
recipient receives almost all the good items (which results in large envy for
other recipients), even though others value the items almost as much. In such
cases, a small reduction in welfare can reduce a large amount of envy. This
seems to be the case for the simulator data. On the other hand, if utilities are
correlated, but only one recipient has very strong preferences, then allowing a
small amount of envy can improve the welfare significantly.

Finally, in Figure~\ref{fig: sample complexity plot} we investigate the quality
of the empirical solutions as the sample size increases. It can been seen that
the approximation gap decreases faster than the theoretical rate, confirming our
sample complexity bound in Theorem~\ref{thm: sample complexity bound}. 

\section{Limitations and Future Directions}
\label{sec: future directions}
Although we believe that the model proposed here is natural, and captures the
most salient aspects of some of the resource allocation problems in real life, 
any implementation of our proposed strategy in critical applications such as 
blood donation should be prefaced with more rigorous backtesting in order to
minimize the risk of unintended consequences in application specific metrics not
studied in this paper.

For future directions, one key property that we did not study in this paper is
the problem of incentive compatibility. For our motivating application of blood
donation, this is not an issue because online platforms such as Meta has
proprietary models that can predict the matching quality between donor and
recipient. This means that the platform observes the value of matchings without
having to rely on the recipients to self-report. This is also true in many other
online matching problems such as sponsored ads. However, in settings where the
central planner relies on the recipients to self-report their valuations for each
of the items, incentive compatibility becomes a crucial issue. We are excited
about the potential of using Optimal Transport in fair-division, and plan on
exploring the incentive issues in future work.

\printbibliography

@article{harvey2018machine,
  title={Machine Learning Theory Lecture 17},
  author={Harvey, Nicholas},
  year={2018}
}

@book{shalev2014understanding,
  title={Understanding machine learning: From theory to algorithms},
  author={Shalev-Shwartz, Shai and Ben-David, Shai},
  year={2014},
  publisher={Cambridge university press}
}

@article{kontorovich2021fat,
  title={Fat-shattering dimension of $ k $-fold maxima},
  author={Kontorovich, Aryeh and Attias, Idan},
  journal={arXiv preprint arXiv:2110.04763},
  year={2021}
}

@article{mendelson2003entropy,
  title={Entropy and the combinatorial dimension},
  author={Mendelson, Shahar and Vershynin, Roman},
  journal={Inventiones mathematicae},
  volume={152},
  number={1},
  pages={37--55},
  year={2003},
  publisher={Springer}
}

@misc{sridharan, 
   title={Note on Refined Dudley Integral Covering Number Bound},
   url={https://www.cs.cornell.edu/~sridharan/dudley.pdf}, 
   author={Sridharan, Karthik},
   year={2021}
}

@article{dudley1967sizes,
  title={The sizes of compact subsets of Hilbert space and continuity of Gaussian processes},
  author={Dudley, Richard M},
  journal={Journal of Functional Analysis},
  volume={1},
  number={3},
  pages={290--330},
  year={1967},
  publisher={Elsevier}
}

@article{bartlett1996fat,
  title={Fat-shattering and the learnability of real-valued functions},
  author={Bartlett, Peter L and Long, Philip M and Williamson, Robert C},
  journal={journal of computer and system sciences},
  volume={52},
  number={3},
  pages={434--452},
  year={1996},
  publisher={Elsevier}
}

@article{agrawal2014dynamic,
  title={A dynamic near-optimal algorithm for online linear programming},
  author={Agrawal, Shipra and Wang, Zizhuo and Ye, Yinyu},
  journal={Operations Research},
  volume={62},
  number={4},
  pages={876--890},
  year={2014},
  publisher={INFORMS}
}

@inproceedings{agrawal2014fast,
  title={Fast algorithms for online stochastic convex programming},
  author={Agrawal, Shipra and Devanur, Nikhil R},
  booktitle={Proceedings of the twenty-sixth annual ACM-SIAM symposium on Discrete algorithms},
  pages={1405--1424},
  year={2014},
  organization={SIAM}
}

@article{devanur2019near,
  title={Near optimal online algorithms and fast approximation algorithms for resource allocation problems},
  author={Devanur, Nikhil R and Jain, Kamal and Sivan, Balasubramanian and Wilkens, Christopher A},
  journal={Journal of the ACM (JACM)},
  volume={66},
  number={1},
  pages={1--41},
  year={2019},
  publisher={ACM New York, NY, USA}
}

@article{balseiro2020best,
  title={The best of many worlds: Dual mirror descent for online allocation problems},
  author={Balseiro, Santiago and Lu, Haihao and Mirrokni, Vahab},
  journal={arXiv preprint arXiv:2011.10124},
  year={2020}
}

@inproceedings{devanur2009adwords,
  title={The adwords problem: online keyword matching with budgeted bidders under random permutations},
  author={Devanur, Nikhil R and Hayes, Thomas P},
  booktitle={Proceedings of the 10th ACM conference on Electronic commerce},
  pages={71--78},
  year={2009}
}

@inproceedings{goel2008online,
  title={Online budgeted matching in random input models with applications to Adwords.},
  author={Goel, Gagan and Mehta, Aranyak},
  booktitle={SODA},
  volume={8},
  pages={982--991},
  year={2008},
  organization={Citeseer}
}

@inproceedings{balseiro2021regularized,
  title={Regularized online allocation problems: Fairness and beyond},
  author={Balseiro, Santiago and Lu, Haihao and Mirrokni, Vahab},
  booktitle={International Conference on Machine Learning},
  pages={630--639},
  year={2021},
  organization={PMLR}
}

@article{sinclair2021sequential,
  title={Sequential Fair Allocation: Achieving the Optimal Envy-Efficiency Tradeoff Curve},
  author={Sinclair, Sean R and Banerjee, Siddhartha and Yu, Christina Lee},
  journal={arXiv preprint arXiv:2105.05308},
  year={2021}
}

@book{robertson1998cake,
  title={Cake-cutting algorithms: Be fair if you can},
  author={Robertson, Jack and Webb, William},
  year={1998},
  publisher={CRC Press}
}

@article{chen2013truth,
  title={Truth, justice, and cake cutting},
  author={Chen, Yiling and Lai, John K and Parkes, David C and Procaccia, Ariel D},
  journal={Games and Economic Behavior},
  volume={77},
  number={1},
  pages={284--297},
  year={2013},
  publisher={Elsevier}
}

@inproceedings{mossel2010truthful,
  title={Truthful fair division},
  author={Mossel, Elchanan and Tamuz, Omer},
  booktitle={International Symposium on Algorithmic Game Theory},
  pages={288--299},
  year={2010},
  organization={Springer}
}

@inproceedings{cohler2011optimal,
  title={Optimal envy-free cake cutting},
  author={Cohler, Yuga J and Lai, John K and Parkes, David C and Procaccia, Ariel D},
  booktitle={Twenty-Fifth AAAI Conference on Artificial Intelligence},
  year={2011}
}

@inproceedings{caragiannis2009efficiency,
  title={The efficiency of fair division},
  author={Caragiannis, Ioannis and Kaklamanis, Christos and Kanellopoulos, Panagiotis and Kyropoulou, Maria},
  booktitle={International Workshop on Internet and Network Economics},
  pages={475--482},
  year={2009},
  organization={Springer}
}

@article{budish2011combinatorial,
  title={The combinatorial assignment problem: Approximate competitive equilibrium from equal incomes},
  author={Budish, Eric},
  journal={Journal of Political Economy},
  volume={119},
  number={6},
  pages={1061--1103},
  year={2011},
  publisher={University of Chicago Press Chicago, IL}
}

@article{caragiannis2019unreasonable,
  title={The unreasonable fairness of maximum Nash welfare},
  author={Caragiannis, Ioannis and Kurokawa, David and Moulin, Herv{\'e} and Procaccia, Ariel D and Shah, Nisarg and Wang, Junxing},
  journal={ACM Transactions on Economics and Computation (TEAC)},
  volume={7},
  number={3},
  pages={1--32},
  year={2019},
  publisher={ACM New York, NY, USA}
}

@article{aude2016stochastic,
  title={Stochastic optimization for large-scale optimal transport},
  author={Aude, Genevay and Cuturi, Marco and Peyr{\'e}, Gabriel and Bach, Francis},
  journal={arXiv preprint arXiv:1605.08527},
  year={2016}
}

@article{kantorovich1942on,
  title={ On the transfer of masses (in russian)},
  author={Leonid Kantorovich},
  journal={Doklady Akademii Nauk,},
  volume={37(2)},
  pages={227--229},
  year={1942},
}

@book{galichon2018optimal,
  title={Optimal transport methods in economics},
  author={Galichon, Alfred},
  year={2018},
  publisher={Princeton University Press}
}

@inproceedings{donahue2020fairness,
  title={Fairness and utilization in allocating resources with uncertain demand},
  author={Donahue, Kate and Kleinberg, Jon},
  booktitle={Proceedings of the 2020 conference on fairness, accountability, and transparency},
  pages={658--668},
  year={2020}
}

@inproceedings{mcelfresh2020matching,
  title={Matching algorithms for blood donation},
  author={McElfresh, Duncan C and Kroer, Christian and Pupyrev, Sergey and Sodomka, Eric and Sankararaman, Karthik Abinav and Chauvin, Zack and Dexter, Neil and Dickerson, John P},
  booktitle={Proceedings of the 21st ACM Conference on Economics and Computation},
  pages={463--464},
  year={2020}
}

@article{aurenhammer1998minkowski,
  title={Minkowski-type theorems and least-squares clustering},
  author={Aurenhammer, Franz and Hoffmann, Friedrich and Aronov, Boris},
  journal={Algorithmica},
  volume={20},
  number={1},
  pages={61--76},
  year={1998},
  publisher={Springer}
}

@inproceedings{scetbon2021equitable,
  title={Equitable and optimal transport with multiple agents},
  author={Scetbon, Meyer and Meunier, Laurent and Atif, Jamal and Cuturi, Marco},
  booktitle={International Conference on Artificial Intelligence and Statistics},
  pages={2035--2043},
  year={2021},
  organization={PMLR}
}

@article{balseiro2015repeated,
  title={Repeated auctions with budgets in ad exchanges: Approximations and design},
  author={Balseiro, Santiago R and Besbes, Omar and Weintraub, Gabriel Y},
  journal={Management Science},
  volume={61},
  number={4},
  pages={864--884},
  year={2015},
  publisher={INFORMS}
}

\newpage
\appendix
\section{Auxiliary Proofs}
\label{sec: auxiliary proofs}

\subsection{Proof of Theorem~\ref{thm: duality of envy constrained OT}}
\begin{proof}
	
Using Fenchel-Rockafellar's duality theorem, the dual of \eqref{eq: individual
envy constraint primal} can be written as 
\begin{align}
    &\max_{f, g, \gamma\geq 0} \int_\cX f(x) d\alpha(x) + g^\top \beta - \sum_{j,k, j\neq k}\gamma_{jk}\lambda_j\label{eq: individual envy constraint dual}\\
    s.t. \quad& \left(1+\sum_{k\neq j} \gamma_{jk}\right)c(x, j) - \sum_{k\neq j}\gamma_{ky}c(x, k)\frac{\beta_k}{\beta_j}\nonumber\\
    & - f(x) - g_j\geq 0 \quad\forall x\in \cX, y_j\in \cY\nonumber
\end{align}

Fixing $g\in \bbR^n$ and $\gamma \in \bbR^{n(n-1)}$, we can check using first
order conditions that the optimal $f(x)$ has the closed form expression:
$$\min\limits_{j\in[n]}\dualg(x, y_j) \coloneqq  \left(1+\sum_{k\neq j}
\gamma_{jk}\right)c(x, j) - \sum_{k\neq j}\gamma_{kj}c(x,
k)\frac{\beta_k}{\beta_j} - g_j$$
Using this, the infinite dimensional optimization problem in \eqref{eq:
individual envy constraint dual} can be transformed to a finite dimensional
optimization problem:
\begin{align}
    \max_{g, \gamma\geq 0}\cE(g,\gamma)\coloneqq\int_\cX  \min_{j\in [n]}\dualg(x, y_j) \,d\alpha(x) + g^\top \beta - \sum_{j,k, j\neq k}\gamma_{jk}\lambda_j\label{eq: individual envy constraint dual finite dim}
\end{align}
Alternatively, we can adapt the Laguerre cell notation in \eqref{eq: optimal
transport dual} to \eqref{eq: individual envy constraint dual finite dim}:
\begin{align}
    \cE(g, \gamma) = \sum_{i\in[n]}\int_{\bbL_{y_i}(g, \gamma)} \dualg(x, y_i)d\alpha(x) + g^\top \beta - \sum_{j,k, j\neq k}\gamma_{jk}\lambda_j\nonumber
\end{align}
where $\bbL_{y_i}(g, \gamma) = \left\{ x\in\cX: y_i = \argmin\limits_{y_j}
\dualg(x, y_j)\right\}$.
\end{proof}


\subsection{Proof of Theorem~\ref{thm: sample complexity bound}}
\begin{proof}
    We prove the result via uniform convergence:
\begin{align}
    &\cE(g^*,\gamma^*) - \cE(\hat g_S, \hat\gamma_S)\nonumber\\
    =   &\cE(g^*,\gamma^*) - \cE_S(\hat g_S, \hat\gamma_S) + \cE_S(\hat g_S, \hat\gamma_S) - \cE(\hat g_S, \hat\gamma_S)\nonumber\\
    \leq&\cE(g^*,\gamma^*) - \cE_S(g^*,\gamma^*) + \cE_S(\hat g_S, \hat\gamma_S) - \cE(\hat g_S, \hat\gamma_S)\nonumber\\
    \leq&\sup_{g, \gamma}\left(\cE(g, \gamma) - \cE_S(g, \gamma)\right) + \sup_{g, \gamma} \left(\cE_S(g, \gamma) - \cE(g, \gamma)\right)\nonumber\\
    \leq&2\sup_{g, \gamma}|\cE(g, \gamma) - \cE_S(g, \gamma)|\label{eq: uniform bound decomposition}
\end{align}
Clearly, it suffices to show that $\cE_S(\cdot)$ converges uniformly to $\cE(\cdot)$.  
For a given $g, \gamma$, the dual objective
function and its' empirical version can be written as 
$$\cE(g,\gamma) = \bbE_{\alpha}[f(X)], \quad \cE_S(g, \gamma) = \frac{1}{m}\sum_{t=1}^m f(X^t).$$
Then we can rewrite the supremum in \eqref{eq: uniform bound decomposition} as:
\begin{equation}
    \sup_{g, \gamma}|\cE(g, \gamma) - \cE_S(g, \gamma)| = \sup_{f\in F} \left|\bbE_\alpha [f(X)] - \frac{1}{m} \sum_{X\in S} f(X)\right|
\end{equation}

Since $|f(X)|\leq \Fbound$ for all $f\in F, X\in \cX$, it follows from
Theorem~26.5 in \cite{shalev2014understanding} that with probability $1-\delta$,
\begin{align}
    \sup_{f\in F} \bbE_\alpha [f(X)] - \frac{1}{m} \sum_{X\in S} f(X) 
    \leq 2\bbE_{S} \left[\Rad_m(F\circ S)\right] + \Fbound\sqrt{\frac{2\log(2/\delta)}{m}}
    \label{eq: partial uniform bound 1}
\end{align}
and the same also holds by replacing $F$ with $-F$.
Here
\begin{equation*}
    \label{eq: Rad definition}
    \Rad_m(F\circ S) \coloneqq \bbE_{\sigma} \left[\frac{1}{m}\sup\limits_f
    \sum_{j=1}^m \sigma_j f(X_j)\right] 
\end{equation*}
is the standard definition of Rademacher complexity of the set $F\circ S$. Since
$\sigma_i$ are $i.i.d.$ Rademacher random variables, it is easy to see that
$\Rad_m(F\circ S) = \Rad_m(-F\circ S)$.
Therefore we
can use a union bound to obtain that with probability $1-\delta$,
\begin{align}
\sup_{f\in F} \left|\bbE_\alpha [f(X)] - \frac{1}{m} \sum_{X\in S} f(X)\right|
\leq2\bbE_{S} \left[\Rad_m(F\circ S)\right] + \Fbound\sqrt{\frac{2\log(4/\delta)}{m}}
\label{eq: uniform bound}
\end{align}

It remains to bound the Rademacher complexity of the $F\circ S$. To do so, we 
use tools from learning theory, and give the following bound on the fat-shattering dimension (\cite{bartlett1996fat}) of the hypothesis class $F$.
\begin{lemma}
    Under Assumption~\ref{assum: boundedness assumptions}, $F$ has
    $\zeta$-fat-shattering dimension of at most
    $\frac{c_0\Fbound^2}{\zeta^2}n\log(n)$, where $c_0$ is some universal constant.
    \label{lem: shattering dimension bound}
\end{lemma}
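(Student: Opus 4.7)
The plan is to exploit the inclusion in \eqref{eq: hypothesis class contained in the min over halfspaces}, namely $F \subseteq F_{min}$. Since the fat-shattering dimension is monotone under set inclusion, it suffices to bound $\fat_\zeta(F_{min})$, where $F_{min}$ is the class of pointwise minima of $n$ affine hypotheses drawn from $F_1, \ldots, F_n$. The argument naturally decomposes into two steps: first bound $\fat_\zeta(F_i)$ for each component class, then combine across the $n$ components via the minimum operation.

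For the first step, I would use Assumption~\ref{assum: boundedness assumptions} together with \eqref{eq: w definition}--\eqref{eq: b definition} to write every $f_i \in F_i$ in the affine form $f_i(x) = w^\top x + b$ with $\|w\|_1 \leq R$ and $|b| \leq R$. Since the data lie in $[0,\bar x]^n$ we have $\|x\|_\infty \leq \bar x$, so $F_i$ is an $\ell_1$-bounded linear class acting on $\ell_\infty$-bounded inputs, uniformly bounded by $\Fbound$. Classical covering-number arguments for such classes (for instance Zhang's bound on $\ell_1$-regularized linear predictors, or Anthony--Bartlett, Theorem~14.20) then give $\fat_\zeta(F_i) \leq c_1 \Fbound^2 \log(n)/\zeta^2$, where the $\log n$ factor is the dimension-dependent price one pays in the $\ell_1/\ell_\infty$ duality.

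For the second step, I would invoke a combination lemma for the pointwise minimum of $n$ real-valued function classes: if each $G_i$ has $\zeta$-fat-shattering dimension at most $d$, then $\{x \mapsto \min_i g_i(x) : g_i \in G_i\}$ has $\zeta$-fat-shattering dimension $O(nd)$. The underlying combinatorial idea is that a set $S = \{x_1, \ldots, x_m\}$ which is $\zeta$-shattered by $F_{min}$ with witnesses $r_j$ induces, for each sign pattern $\epsilon \in \{\pm 1\}^m$, an assignment of each lower-shattered point $x_j$ to some argmin index $i_j \in [n]$ together with a margin condition on the corresponding $f_{i_j} \in F_{i_j}$; upper-shattered points impose a margin condition on \emph{every} $f_i$ simultaneously. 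A Sauer--Shelah--Warmuth count applied separately to each $F_i$ bounds the number of realizable patterns by roughly $(\text{poly}(m,d))^n$, and comparing with the $2^m$ required patterns forces $m = O(nd)$. Multiplying the two estimates yields $\fat_\zeta(F) \leq \fat_\zeta(F_{min}) \leq c_0 \Fbound^2 n \log(n)/\zeta^2$, as claimed.

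The main obstacle is getting the combination step to produce only a factor of $n$ (and no additional $\log$ factor beyond the one already present from step one). Several standard combination results, such as the Baum--Haussler bound for intersections of halfspaces, introduce an extra $\log(nd)$ term that would blow up the final bound to $\Fbound^2 n \log^2(n)/\zeta^2$. One must either invoke a sharp fat-shattering-specific combination lemma, or carry out the counting argument directly while carefully exploiting the asymmetry between the ``for all $i$'' condition on upper-shattered points and the ``there exists $i$'' condition on lower-shattered points, so that only one side of the shattering contributes the combinatorial blowup.
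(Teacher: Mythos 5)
Your high-level route is the same as the paper's: observe $F \subseteq F_{min}$ via \eqref{eq: hypothesis class contained in the min over halfspaces}, use monotonicity of the fat-shattering dimension under inclusion, and reduce everything to bounding $\fat_\zeta(F_{min})$ for the $n$-fold minimum of uniformly bounded affine classes. The difference is that the paper's entire proof at that point is a citation: Theorem~3 of \cite{kontorovich2021fat} directly gives $\fat_\zeta(F_{min}) \leq \frac{c_0\Fbound^2}{\zeta^2}n\log n$ for such $k$-fold minima/maxima of bounded linear classes, and the proof stops there. You instead try to reconstruct that bound by (i) bounding each $\fat_\zeta(F_i)$ by $O(\Fbound^2\log(n)/\zeta^2)$ using the $\ell_1/\ell_\infty$ duality (this step is fine under Assumption~\ref{assum: boundedness assumptions} and \eqref{eq: w definition}--\eqref{eq: b definition}), and (ii) multiplying by $n$ via a combination lemma for pointwise minima.

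Step (ii) is where your proof is not closed, and you say so yourself: the generic combination results you would reach for (Baum--Haussler-style counting, or the general $k$-fold aggregation bounds) lose at least an extra logarithmic factor, which would degrade the bound to $\Fbound^2 n\log^2(n)/\zeta^2$ or worse. The "sharp fat-shattering-specific combination lemma" you say one must invoke is precisely the content of the cited Theorem~3, which does not factor through the individual fat-shattering dimensions at all but argues directly about $k$-fold extrema of norm-bounded linear functionals. So the obstacle you identify is real, your sketch of the Sauer--Shelah--Warmuth counting does not by itself resolve it, and as written the proof has a gap exactly at the step the paper fills by citation. To complete your argument you would either need to import that theorem (at which point step (i) and your counting sketch become unnecessary) or carry out the refined linear-class-specific argument in full.
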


The proof of Lemma~\ref{lem: shattering dimension bound} can be found in the Appendix. The above bound on the fat-shattering dimension can be used to bound the
covering number (see Definition~27.1 of \cite{shalev2014understanding}) of
$F\circ S$.  Theorem~1 from \cite{mendelson2003entropy} states that
\begin{equation}
\cN(\delta, F, ||\cdot||_2)\leq \left(\frac{2B}{\delta}\right)^{c_1 \fat_{c_2\delta}(F)}
\label{eq: covering number bound}
\end{equation}
where $B$ is a uniform bound on the absolute value of any $f\in F$.
Let $B = \Fbound$, we have that 

\begin{align*}
&\Rad_m(F\circ S)\\
\leq& \inf_{\delta'>0} \left\{4\delta' + 12\int_{\delta'}^{B}\sqrt{\frac{\log \cN(\delta, F, ||\cdot||_2)}{m}} d\delta\right\} \\
\leq& \inf_{\delta'>0} \left\{4\delta' + 12\frac{\sqrt{c_1c_0}}{c_2}B\sqrt{\frac{n\log n}{m}}\int_{\delta'}^{B}\sqrt{\log\left(\frac{2B}{\delta}\right)} d\delta\right\} \\
 = & c'\sqrt{\frac{n\log n (\log m)^3}{m}}
\end{align*}
Where we used Dudley's chaining integral \cite{sridharan,dudley1967sizes}, 
Lemma~\ref{lem: shattering dimension bound} and \eqref{eq: covering number bound},
and setting  $\delta'=\frac{1}{\sqrt{m}}$ respectively.
Plugging the above back to \eqref{eq: uniform bound} and \eqref{eq: uniform
bound decomposition}, we see that with probability $1-\delta$, 
\begin{equation*}
    \cE(g^*, \gamma^*)
- \cE(\hat g_S, \hat \gamma_S) \leq c'\left(\sqrt{\frac{n\log n (\log m)^3}{m}} + \sqrt{\frac{1\log\frac{1}{\delta}}{m}}\right).
\end{equation*}%
Conversely, ignoring the log terms, $m$ needs to be at most on
the order of $\tilde O\left(\frac{n}{\epsilon^2}\right)$ in order for $\cE(g^*, \gamma^*)
- \cE(\hat g_S, \hat \gamma_S)$ to be bounded by $\epsilon$ with high probability.

\end{proof}
Proof of Lemma~\ref{lem: shattering dimension bound}
\begin{proof}
    Theorem~3 in \cite{kontorovich2021fat} shows that $\fat_\zeta(F_{min}) \leq
    \frac{c_0\Fbound^2}{\zeta^2}n\log{n}$.
    Since the shattering dimension is monotone in the size of the set, we are
    done.
    
\end{proof}
\subsection{Experimental Setup}
\label{sec: experimental setup}
For the artificial data, the value utility vectors are generated from $X = [1,
0.7] - Z \begin{bmatrix}0.2, &0\\ 0.8, &0.4\end{bmatrix}$ where $Z\sim Unif(0,
1) \times Unif(0,1)$.  For finding the optimal allocation policy on the
artificial data, we used Algorithm~\ref{algo: Projected SGD for Envy Constrained
Optimal Transport} with $T=2\cdot10^5$. For simulator data, we used $T =
2.5\cdot 10^6$. To generate Figure~\ref{fig: slanted dist allocation plot} we
sampled $6000$ points from the distribution and plotted them, colored by the
allocation. For Figure~\ref{fig: sample complexity plot}, for each $m$ we ran
$16$ trials, sampling a different set of $m$ data points as our training data
per trial. All experiments are run on a 2019, 6-core Macbook Pro laptop. The simulator code is open sourced by \cite{mcelfresh2020matching} at \url{https://github.com/duncanmcelfresh/blood-matching-simulations}, and also included in the supplementary material.

\end{document}